\DeclareMathOperator{\tr}{Tr}
\DeclareMathOperator{\Ker}{Ker}
\DeclareMathOperator{\IM}{Im}
\DeclareMathOperator{\rank}{rank}
\renewcommand{\=}{:=}
\newcommand{\RR}{\mathbb{R}}
\newcommand{\pT}{\partial^{T}}
\newcommand{\de}{\delta}
\newcommand{\p}{\partial}
\newtheorem{thm}{Theorem}[section]
\newtheorem{cor}[thm]{Corollary}
\theoremstyle{definition}
\theoremstyle{definition}
\theoremstyle{definition}
\newtheorem{rem}[thm]{Remark}
\theoremstyle{definition}
\numberwithin{equation}{section}
\numberwithin{table}{section}
\begin{document}
\title{\LARGE\bf Note on homological modeling\\ of the electric circuits}
\date{}%{\bf Draft 17a \# \today}
\author{Eugen Paal and M\"art Umbleja}

\maketitle

\thispagestyle{empty}

\begin{abstract}
Based on a simple example, it is explained how the homological analysis may be applied for modeling of the electric circuits. The homological \emph{branch}, \emph{mesh} and \emph{nodal} analyses are presented. Geometrical interpretations are given.
\end{abstract}

\section{Introduction and outline of the paper}

The classical electric circuit analysis is based on the 2 Kirchhoff Laws \cite{O'Malley}:

\begin{enumerate}
\itemsep-2pt
\item{[KCL]}
\textit{Kirchhoff's current law} says that: At any instant in a circuit the algebraic sum of the currents entering a node equals the algebraic sum of those leaving.
\item{[KVL]}
\textit{Kirchhoff's voltage law} says that: At any instant around a loop, in either a clockwise or counterclockwise direction, the algebraic sum of the voltage drops equals the algebraic sum of the voltage rises.
\end{enumerate}

The homological analysis of the electric circuits is based on its geometric elements - nodes, contours (edges, branches), meshes (simple closed loops), also called the chains, and using the geometric boundary operator of the circuit. The latter depends only on the geometry (topology) of the circuit. Then, both of the Kirchhoff laws can be presented in a compact algebraic form called the homological Kirchhoff Laws (HKL).

In the present note, based on a simple example, it is explained how the homological analysis may be applied for modeling of the electric circuits. The homological \emph{branch}, \emph{mesh} and \emph{nodal} analyses are presented. With slight modifications, we follow \cite{BS90,Frankel,Roth55} where the reader can find more involved theoretical details and related References as well. Geometrical interpretations are given. For simplicity, the cohomological aspects are not exposed.

\section{Notations}

Consider a simple DC electric circuit $C$ on Fig \ref{circuit}. It has the following basic geometric spanning spaces:
%\bigskip
\begin{figure}[h]
\caption{DC circuit}
\label{circuit}
\begin{center}
\resizebox{!}{5.3cm}{\includegraphics{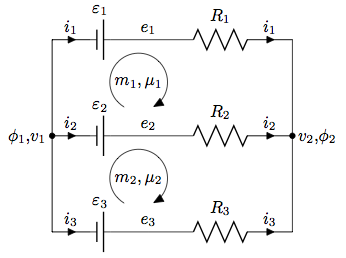}} 
\end{center}
\end{figure}
%
%\noindent
\begin{itemize}
\itemsep-2pt
\item
\emph{Node space} $C_0\=\braket{v_1 v_2}_\RR$,
\item
\emph{Contour space} $C_1\=\braket{e_1 e_2 e_3}_\RR$,
\item
\emph{Mesh space} $C_2\=\braket{m_1 m_2}_\RR$. 
\end{itemize}
Elements of $C_n$ are called $n$-chains and we denote $C\=(C_n)_{n=0,1,2}$. Denote the algebraic electrical parameters (with values from the coefficient field $\RR$) as follows:

\begin{itemize}
\itemsep-2pt
\item
$\phi_1,\phi_2$ - the node potentials,  
\item
$i_1,i_2,i_3$ - the contour currents,
\item 
$\varepsilon_1,\varepsilon_2,\varepsilon_3$ - the voltages,
\item
$\mu_1,\mu_2$ - the mesh currents.
\item
$\RR_+ \ni R_1,R_2,R_3$ - the resistors,
\item
$\RR_+ \ni G_1,G_2,G_3$ - the conductances, defined by $G_nR_n\=1$ ($n=1,2,3$).
\end{itemize}
To denote the physical variables, it is convenient to use the Dirac \textit{bra-ket} notations. Thus, denote the rows by \emph{bra}-vectors, e.g,
\begin{align}
\label{notations}
\bra{\phi}
&\=\bra{\phi_1\phi_2}\=[\phi_1\phi_1]\\
\bra{i}
&\=\bra{i_1 i_2 i_3}\=[i_1 i_2 i_3]\\
\bra{\mu}
&\=\bra{\mu_1 \mu_2}\=[\mu_1 \mu_2]\\
\bra{\varepsilon}
&\=\bra{\varepsilon_1 \varepsilon_2 \varepsilon_3}\=[\varepsilon_1 \varepsilon_2 \varepsilon_3]\\
\bra{R}
&\=\bra{R_1 R_2 R_3}\=[R_1 R_2 R_3]\\
\bra{G}
&\=\bra{G_1 G_2 G_3}\=[G_1 G_2 G_3]
\end{align}
and their (here real) transposes are denoted by ket-vectors $\ket{\cdots}\=\bra{\cdots}^{T}$, the latter are thus \emph{columns}. In such a notation, the bra-ket vectors may be considered as coordinate vectors of the chains.  One must be careful about context, i.e the physical meaning of the bra-kets, the chain spaces must be distinguished according to the physical units. Also, not all chains represent the \emph{physical} states.

\section{Circuit metrics \& scalar product}

By definition, the circuit metrical matrix is symmetric, \emph{positively} defined and reads
\begin{align}
R 
=
\begin{bmatrix*}[r]
R_{1} & 0 & 0\\
0 & R_{2} & 0\\
0 & 0 & R_{3}
\end{bmatrix*},
\quad \det R>0,\quad R^{T}=R\quad \text{(symmetry)}
\end{align}
and its inverse, defined by  $GR=1_{3\times3}=RG$ is
\begin{align}
G
\=
G
=
\dfrac{1}{R_{1} R_{2} R_{3}}
\begin{bmatrix*}[c]
R_{2}R_{3} & 0 & 0\\
0 & R_{1}R_{3} & 0\\
0 & 0 & R_{1}R_{2}
\end{bmatrix*}
=
\begin{bmatrix*}[c]
G_{1} & 0 & 0\\
0 & G_{2} & 0\\
0 & 0 & G_{3}
\end{bmatrix*}
=G^{T}
\end{align}
Let $\rho$ denote either $R$ or $G$. Then the \emph{non-euclidean} elliptic (iso)scalar product $\braket{\cdot|\cdot}_\rho$ is defined by $\braket{\cdot|\cdot}_\rho \= \braket{\cdot|\rho|\cdot}$. One must be careful with limits (contractions) $\det \rho\to0$ and remember that every \emph{physical} wire and voltage source has at least its (nontrivial) \emph{positive self-resistance}, that may be included in $\rho$, so that $\det\rho>0$. 

With respect to the circuit metrics we may define the (iso)\emph{norm} function $|\cdot|_\rho$ by $|x|_\rho\=\sqrt{\braket{x|x}_\rho}$. Then, the Cauchy-Schwartz (CS)  inequality $|\braket{x|y}_\rho |\leq |x|_\rho |y|_\rho$ is evident whenever the scalar product exists for given vectors, as CS inequality holds for every scalar (inner) product.

\section{Boundary operator}

Now construct the boundary operator $\p\=(\p_n)_{n=0,1,2,3}$ of the electric circuit  presented on Fig. \ref{circuit} and its matrix representation. In what follows, we identify the chains with their coordinate ket-vectors. 

First construct $\p_0:C_0\to C_{-1}\=\braket{0}_\RR$. By definition, the nodes (vertices) are elementary elements of circuits with trivial boundaries, thus 
\begin{align}
\p_0 v_1\=0\=\p_0 v_2
\quad\Longrightarrow\quad
\p_0
=[0\quad 0]\=0_{1\times2}
%\quad : C_0\= \RR^{2}\to\RR
\end{align}
Next define $\p_1:C_1\to C_0$, which acts on the directed contours (edges, branches) by
\begin{align}
\p_1 e_1 \= v_2-v_1 \= \ket{-1;1}\\
\p_1 e_2 \= v_2-v_1 \= \ket{-1;1}\\
\p_1 e_3 \= v_2-v_1 \= \ket{-1;1}
\end{align}
In coordinate (matrix) representation one has
\begin{align}
\p_1 =
\begin{bmatrix*}[r]
-1 & -1 & -1\\
1  & 1 & 1
\end{bmatrix*}
\quad\Longrightarrow\quad
\pT_{1} =
\begin{bmatrix*}[r]
-1 & 1\\
-1 & 1\\
-1 & 1
\end{bmatrix*}
\=
\delta^{0}
\end{align}
Evidently, $\p_0\p_1=0$. Note that $\rank \p_1=1$. 

Now define $\p_2:C_2\to C_1$, which acts on the closed clockwise directed contours by
\begin{align}
\p_2 m_1 \= e_1-e_2 \= \ket{1;-1;0}\\
\p_2 m_2 \= e_2-e_3 \= \ket{0;1;-1}
\end{align}
from which it follows that
\begin{align}
\p_2
\=
\begin{bmatrix*}[r]
1 & 0\\
-1 & 1\\
0 & -1
\end{bmatrix*}
\quad\Longrightarrow\quad
\pT_2
=
\begin{bmatrix*}[r]
1 & -1 & 0\\
0 & 1 & -1
\end{bmatrix*}
\=
\delta^{1}
\end{align}
One again can easily check that $\p_1\p_2=0_{2\times2}$ as well as $\delta^{1}\delta^{0}=0$. Note that $\rank\p_2=2$.  

We finalize the construction by defining $\p_3\=0_{2\times1}$, which means that $C_3\=\braket{0}_\RR$.
\begin{rem}
One must be careful when comparing our representation with \cite{Roth55,BS90,Frankel}, where the mesh space $C_2$ is identified with $\IM\p_2\subset C_1$.
\end{rem}

\section{Homology}

The boundary operator is defined by its action on the geometrical elements of the circuit, thus not depending on the particular electrical parameters - potentials, voltages, circuit and mesh currents - but only on the topology of the circuit under consideration. One can visualize the boundary operator and its (mathematical) domains and codomains by the following complex: 
\begin{align}
\label{e-sequence}
\begin{CD}
(C_3\=)\quad 0 @> (\p_3=0)  >> C_2 @> \p_2 >> C_1 @> \p_1 >> C_0 @> (\p_0=0)  >> 0 \quad (\=C_{-1})
\end{CD}
\end{align}
As we have seen, the boundary operator is \emph{nilpotent},
\begin{align}
\label{nilpotency}
\IM\p_{n+1}\subseteq\Ker\p_n
\quad\Longleftrightarrow\quad
\p_{n}\p_{n+1}=0,\quad n=0,1,2
\end{align}
which is concisely denoted as $\p^{2}=0$.  A complex  $(C,\p)\=(C_n,\p_n)_{n=0,1,2,3}$ is said to be \emph{exact} at $C_n$ if $\IM\p_{n+1}=\Ker\p_n$. To study the  \emph{exactness} of this complex, the following three conditions must be inquired:
\begin{enumerate}
\itemsep-2pt
\item
$0\overset{?}{=}\Ker\p_2$
\item
$\IM\p_2\overset{?}{=}\Ker\p_1$
\item
$\IM\p_1\overset{?}{=}C_0$
\end{enumerate}
For a short exact sequence one can write:
\begin{align}
C_0\cong\dfrac{C_1}{\Ker\p_1}\cong\dfrac{C_1}{\IM\p_2},
\quad
\dim C_1=\dim C_0+\dim\Ker\p_1=\dim C_0+\dim\IM\p_2
\end{align}
The deviation of a complex from exactness can be described by the \emph{homology} concept. The \emph{homology} of the complex $C\=(C_n,\p_n)_{n=0,1,2,3}$ is the sequence $H(C)\=(H_n(C))_{n=0,1,2}$ with homogeneous components $H_n(C)$ called the \emph{homology spaces} that are defined as quotient spaces
\begin{align}
\label{hom-spaces}
H_{n}(C)
\=
\dfrac{\!\!Z_n (C)\=\Ker\p_n}{B_n(C)\=\IM\p_{n+1}},\quad \dim Z_n=\dim H_n+\dim B_n, \quad n=0, 1, 2
\end{align}
Chains from $Z(C)\=\Ker \p$ are called \emph{cycles} and from $B(C)\=\IM\p$ \emph{boundaries}. 

Note that correctness of the homology construction is based on the inclusion \eqref{nilpotency}. One can easily see from \eqref{hom-spaces} that in homological terms the exactness conditions may be presented as follows:
\begin{enumerate}
\itemsep-2pt
\item
$0=\Ker\p_2\hspace{10mm}\Longleftrightarrow\quad H_2=0\quad\Longleftrightarrow\quad \dim H_2=0$
\item
$\IM\p_2=\Ker\p_1\quad\Longleftrightarrow\quad H_1=0\quad\Longleftrightarrow\quad \dim H_1=0$
\item
$\IM\p_1=C_0\hspace{9mm}\Longleftrightarrow\quad H_0=0\quad\Longleftrightarrow\quad \dim H_0=0$
\end{enumerate}

\section{Homological Kirchhoff Laws}

As we can see, not all chains represent the \emph{physical} states. The real electrical configurations are prescribed by the Kirchhoff Laws. The homological (form of the) Kirchhoff Laws (HKL) read (see e.g \cite{Roth55,BS90,Frankel})
\begin{enumerate}
\itemsep-2pt
\item{[HKCL]}
$\p_1\ket{i}=0$
\hspace{13mm}\quad $\Longleftrightarrow$ \quad 
$\ket{i}\in\Ker\p_1$
\item{[HKVL]}
$R\ket{i}=\ket{\varepsilon}-\delta^{0}\ket{\phi}$
\quad $\Longleftrightarrow$ \quad 
$R\ket{i}-\ket{\varepsilon}\in\IM\de^{0}$
\end{enumerate}
Thus, in homological terms, the Kirchhoff Laws can compactly be presented  by using the boundary (and coboundary) operators of a particular circuit. 

It must be noted that the Kirchhoff Laws are the \emph{physical} laws and as other physical laws these can not be fully proved mathematically or other theoretical discussions, but tested only via the physical measurements and observations. This concerns the HKL as well. Here one can observe certain analogy with variational principles of physics and the differential equations of the dynamical systems.

Below we present the homological \emph{branch}, \emph{mesh} and \emph{nodal} analyses, as well as explain how the KCL and KVL follow from the HKL. 

The HKCL tells us that the \emph{physical} branch currents are realized only in $Z_1(C)\=\Ker\p_1$ (cycles). To describe the latter, recall the circuit notations \eqref{notations}. We have
\begin{align}
\ket{\varepsilon}-\de^{0}\ket{\phi}
=
\begin{bmatrix*}[r]
\varepsilon_{1} \\
\varepsilon_{2} \\
\varepsilon_{3}
\end{bmatrix*}
-
\begin{bmatrix*}[r]
-1 & 1\\
-1 & 1\\
-1 & 1
\end{bmatrix*}
\begin{bmatrix*}[r]
\phi_{1} \\
\phi_{2}
\end{bmatrix*}
= 
\begin{bmatrix*}[r]
\varepsilon_{1}+\phi_{1}-\phi_{2}\\
\varepsilon_{2}+\phi_{1}-\phi_{2}\\
 \varepsilon_{3}+\phi_{1}-\phi_{2}
\end{bmatrix*}
\end{align}
and one can see that
\begin{align}
\ket{i}
\=&
\ket{i_1 i_2 i_3}\\
=&
G \left(\ket{\varepsilon}-\de^{0}\ket{\phi} \right)
\\
=&
\begin{bmatrix*}[r]
G_{1}&0 & 0\\
0 & G_{2} & 0\\
0 & 0 & G_{3}
\end{bmatrix*}
\begin{bmatrix*}[r]
\varepsilon_{1}+\phi_{1}-\phi_{2}\\
\varepsilon_{2}+\phi_{1}-\phi_{2}\\
\varepsilon_{3}+\phi_{1}-\phi_{2}
\end{bmatrix*}\\
=&
\begin{bmatrix*}[r]
G_{1}(\varepsilon_{1}+\phi_{1}-\phi_{2})\\
G_{2}(\varepsilon_{2}+\phi_{1}-\phi_{2})\\
G_{3}(\varepsilon_{3}+\phi_{1}-\phi_{2})
\end{bmatrix*} \\
=&
\ket{\underbrace{G_{1}(\varepsilon_{1}+\phi_{1}-\phi_{2}}_{i_1});
\underbrace{G_{2}(\varepsilon_{2}+\phi_{1}-\phi_{2}}_{i_2});
\underbrace{G_{3}(\varepsilon_{3}+\phi_{1}-\phi_{2}}_{i_3})}
\end{align}
which results
\begin{align}
i_{1}
=
\dfrac{\varepsilon_{1}+\phi_{1}-\phi_{2}}{R_{1}},
\quad
i_{2}
=
\dfrac{\varepsilon_{2}+\phi_{1}-\phi_{2}}{R_{2}},
\quad
i_{3}
=
\dfrac{\varepsilon_{3}+\phi_{1}-\phi_{2}}{R_{3}}
\end{align}
Now, the conventional KVL around the closed loops easily follow:
\begin{align}
\bar{\de}\phi
\=
\phi_1-\phi_2
&= -\varepsilon_1+R_1 i_1\\
&= -\varepsilon_2+R_2 i_2\\
&= -\varepsilon_3+R_3 i_3
\end{align}
while consistency is evident. Hence, the HKCL reads
\begin{align}
\label{i-plane}
\p_{1}\ket{i_1i_2i_3}=0
\quad\Longleftrightarrow\quad i_{1}+i_{2}+i_{3}=0
\quad \text{(branch currents plane)}, \quad \dim\Ker\p_{1}=2
\end{align}
that we can rewrite as
\begin{align}
G_{1}(\varepsilon_{1}+\phi_{1}-\phi_{2})+
G_{2}(\varepsilon_{2}+\phi_{1}-\phi_{2})+
G_{3}(\varepsilon_{3}+\phi_{1}-\phi_{2})=0
\end{align}
from which it easily follows
\begin{align}
\label{v-plane}
G_{1}\varepsilon_{1}+G_{2}\varepsilon_{2}+G_{3}\varepsilon_{3}
=-(G_1+G_2+G_3)(\phi_{1}-\phi_{2})
\quad \text{(voltage plane)}
\end{align}
Hence, the voltage drop $\bar{\de}\phi$ between nodes $v_1,v_2$ is given by
\begin{align}
\label{v-drop}
-\bar{\de}\phi
=
\dfrac{\braket{G|\varepsilon}}{\tr G}
=
\dfrac{\braket{G_1 G_2 G_3|\varepsilon_1\varepsilon_2\varepsilon_3}}{\tr G}
\quad \text{(Millman's formula, $n=3$)}
\end{align}
The latter tells us that for fixed $\bar{\de}\phi$ the algebraic voltages are not fully arbitrary, because the circuit voltage point $V_\varepsilon\=(\varepsilon_1;\varepsilon_2;\varepsilon_3)$ lies on the voltage 2-plane \eqref{v-plane} as the result of the HKCL and HKVL while the circuit branch current point $I_i\=(i_1;i_2;i_3)$ lies on the current 2-plane \eqref{i-plane}. Alternatively, one can consider the Millman's formula \eqref{v-drop} as a generator of $\bar{\de}\phi$ as well.

We know from the HKCL that the \emph{physical} branch currents are cycles, i.e $\p_1\ket{i}=0$. Thus it is natural to search the latter as a boundary  
\begin{align}
\ket{i}\=\p_2\ket{\mu}\quad \in B_1(C)\=\IM\p_2\quad \text{(converse Poincar\'e lemma)}
\end{align}
where $\ket{\mu}$ is called the \emph{mesh current}.  Calculate:
\begin{align}
\ket{i_1i_2i_3}
&=
\begin{bmatrix*}[r]
1 & 0\\
-1 & 1\\
0 & -1
\end{bmatrix*}
\ket{\mu_1\mu_2}
\\
&=
\ket{\mu_1;-\mu_1+\mu_2;-\mu_2}
\end{align}
from which we obtain
\begin{align}
i_1=\mu_1,\quad i_2=-\mu_1+\mu_2,\quad i_3=-\mu_2
\end{align}
and hence the HKCL becomes automatic. The mesh currents may be used to construct bases in $Z_1$.

\section{Homological analysis and modeling}

Now we may collect the homological properties of the electric circuit on Fig. \ref{circuit} as follows.

\begin{thm}[cf \cite{Roth55}]
The electric circuit of Fig. \ref{circuit} can be represented by the following short exact sequence:
\begin{align}
\label{SES2}
\begin{CD}
0 @>   >> C_2 @> \p_2 >> C_1 @> \p_1 >> \IM \p_1 @>   >> 0  
\end{CD}
\end{align}
\end{thm}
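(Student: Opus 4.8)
The plan is to verify that the sequence \eqref{SES2} is exact at each of its three non-trivial positions. Exactness at the rightmost $\IM\p_1$ is automatic: by construction $\p_1$ is regarded as a map onto its own image, hence surjective there. The substantive content is therefore exactness at $C_2$ (injectivity of $\p_2$) together with exactness at $C_1$ (the equality $\IM\p_2=\Ker\p_1$), which in the homological language of \eqref{hom-spaces} amount respectively to $H_2=0$ and $H_1=0$. All computations reduce to finite-dimensional linear algebra with the ranks already recorded above.

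First I would establish exactness at $C_2$. Since $\rank\p_2=2=\dim C_2$, the rank-nullity theorem gives $\dim\Ker\p_2=\dim C_2-\rank\p_2=0$, so $\Ker\p_2=0$ and $\p_2$ is injective. Equivalently $H_2=\Ker\p_2/\{0\}=0$, which settles the first exactness condition.

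Next I would establish exactness at $C_1$. The nilpotency \eqref{nilpotency}, i.e.\ $\p_1\p_2=0$, yields the inclusion $\IM\p_2\subseteq\Ker\p_1$. To upgrade this to equality I would compare dimensions: on one hand $\dim\IM\p_2=\rank\p_2=2$, and on the other hand rank-nullity applied to $\p_1$ gives $\dim\Ker\p_1=\dim C_1-\rank\p_1=3-1=2$. A subspace of equal finite dimension inside a vector space must coincide with it, so $\IM\p_2=\Ker\p_1$, equivalently $H_1=0$. This is the only step carrying real mathematical content, and the pitfall to watch is that the containment from nilpotency must be secured \emph{before} invoking the dimension count, since equality of dimensions alone would not force the two subspaces to agree.

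Having verified the injectivity of $\p_2$, the middle exactness $\IM\p_2=\Ker\p_1$, and the tautological surjectivity of $\p_1$ onto $\IM\p_1$, the three defining conditions of a short exact sequence are met, completing the proof. As a consistency check one may confirm the dimension bookkeeping $\dim C_1=\dim\IM\p_1+\dim\Ker\p_1=1+2=3$, which agrees with $\dim C_2+\dim\IM\p_1=2+1=3$ coming from the injectivity of $\p_2$.
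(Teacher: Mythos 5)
Your proof is correct and follows essentially the same route as the paper's: verifying exactness at each position by dimension counting with $\rank\p_2=2$ and $\rank\p_1=1$. The only (welcome) refinement is that you make explicit the inclusion $\IM\p_2\subseteq\Ker\p_1$ coming from $\p_1\p_2=0$ before comparing dimensions, a step the paper leaves implicit since it was established earlier in \eqref{nilpotency}.
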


\begin{proof}
We compactly collect the basic points of the proof and for convenience use notations of \eqref{e-sequence} as well as the dimensional considerations in \eqref{hom-spaces}.
\begin{enumerate}
\item
Exactness at $C_2$: $0=\Ker\p_2\quad\overset{?}{\Longleftrightarrow}\quad H_2=0$.\\
Note that $\IM\p_3=0$ and  $\rank\p_2=2$ (maximal). Hence, $\dim\IM\p_3=0=\dim\Ker\p_2$, which is equivalent to $H_2=0$.
\item
$\IM\p_2=\Ker\p_1\quad\overset{?}{\Longleftrightarrow}\quad H_1=0$.\\
Note that $\dim\IM\p_2=2=\dim\Ker\p_1$, which is equivalent to $H_1=0$.
\item
$\IM\p_1=\Ker\p_0\quad\overset{?}{\Longleftrightarrow}\quad H_0=0$.\\
Note that $\dim\IM\p_1=1=\dim\Ker\p_0$, which is equivalent to $H_0=0$.
\qedhere
\end{enumerate}
\end{proof}
\begin{cor}
Thus we have
\begin{align}
\IM\p_1\cong\dfrac{C_1}{\Ker\p_1}\cong\dfrac{C_1}{\IM\p_2}
%\tag*{\qed}
\end{align}
\end{cor}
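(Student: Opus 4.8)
The plan is to obtain the two isomorphisms from two independent ingredients: the left one from the general first isomorphism theorem for linear maps, and the right one from the exactness at $C_1$ already established in the Theorem. Since the corollary is essentially a repackaging of facts now in hand, the work is mostly bookkeeping about which isomorphism comes from where.

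First I would treat the left isomorphism $\IM\p_1\cong C_1/\Ker\p_1$, which is nothing but the first isomorphism theorem applied to the $\RR$-linear map $\p_1\:C_1\to C_0$. Concretely I would define the induced map $\bar{\p}_1\:C_1/\Ker\p_1\to\IM\p_1$ by $\bar{\p}_1([x])\=\p_1 x$ on cosets $[x]\=x+\Ker\p_1$, and check the three routine points: well-definedness (any two representatives of $[x]$ differ by an element of $\Ker\p_1$, on which $\p_1$ vanishes), injectivity (if $\p_1 x=0$ then $x\in\Ker\p_1$, so $[x]=0$), and surjectivity (every element of $\IM\p_1$ is $\p_1 x=\bar{\p}_1([x])$ by construction). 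Hence $\bar{\p}_1$ is a linear isomorphism. A dimensional cross-check confirms it: $\dim(C_1/\Ker\p_1)=\dim C_1-\dim\Ker\p_1=3-2=1=\rank\p_1=\dim\IM\p_1$.

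Next I would establish the right isomorphism $C_1/\Ker\p_1\cong C_1/\IM\p_2$ by observing that the two denominators actually coincide. Nilpotency \eqref{nilpotency} already gives the inclusion $\IM\p_2\subseteq\Ker\p_1$, and part~2 of the Theorem supplies the matching dimensions $\dim\IM\p_2=2=\dim\Ker\p_1$; together these force $\IM\p_2=\Ker\p_1$ (equivalently $H_1=0$, exactness at $C_1$). Since the two subspaces are equal inside $C_1$, the two quotient spaces are literally the same object, so this isomorphism is in fact an identity and needs no further argument.

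The main point to watch is the asymmetry between the two relations rather than any genuine difficulty: the left $\cong$ holds for any linear boundary map and carries real content (it identifies the quotient with the image via $\bar{\p}_1$), whereas the right $\cong$ is not an abstract isomorphism theorem but a direct consequence of the homological statement $H_1=0$ proved in the Theorem. I would therefore emphasize in the write-up that the corollary is precisely the first isomorphism theorem for $\p_1$ glued to the already-proven exactness at $C_1$, and that there is no separate obstacle beyond attributing each isomorphism to its correct source.
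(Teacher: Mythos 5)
Your proposal is correct and matches the route the paper intends: the corollary is left without explicit proof precisely because it is the first isomorphism theorem for $\p_1$ combined with the exactness at $C_1$ ($\IM\p_2=\Ker\p_1$) established in part~2 of the Theorem, exactly as you argue (and as the paper foreshadows with its generic remark about short exact sequences in the Homology section). Your dimensional cross-checks agree with the paper's ranks, so nothing further is needed.
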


\begin{rem}[correctness]

As usual in the mathematical physics, by the \emph{correctness} of a (modeling) problem one means:
\begin{enumerate}
\itemsep-2pt
\item
Existence of the solution.
\item
Uniqueness of the solution.
\item
Stability of the solution under the infinitesimal deformation of the physical parameters.
\end{enumerate}
Note that  conditions 1 and 2 are related to the short exact sequence \eqref{SES2} while the explicit form of the solution and its stability is given by the Millman formula \eqref{v-drop}. For more careful study one has to apply the \emph{cohomological} analysis as well, the latter can be realized by the \emph{dual} to the short exact sequence  \ref{SES2}. Here, we omit the cohomological analysis, one can find more details in  \cite{Roth55,BS90}.

\end{rem}

\section{Numerical example}

As a simple example determine the branch and mesh currents in the circuit shown in Fig. \ref{circuit}. Take the electric parameters as (we follow \cite{O'Malley}, Problem 4.10):
\begin{align}
&\varepsilon_{1}\,\,=40V,\quad \varepsilon_{2}=12V,\quad\varepsilon_{3}=-24V\\
& R_{1}=6\Omega,\quad\, R_{2}=4\Omega,\quad\, R_{3}=12\Omega
\end{align}
First calculate the voltage drop
\begin{align}
\bar{\de}\phi
&\=
\phi_{1}-\phi_{2}\\
&=
-\dfrac{40V\cdot4\Omega\cdot12\Omega+12V\cdot6
\Omega\cdot12\Omega-24V\cdot6\Omega\cdot4\Omega}{6\Omega\cdot4\Omega+4\Omega\cdot12\Omega+12\Omega\cdot6\Omega}\\
&=
-\dfrac{(1920+864-576)V\Omega^{2}}{(24+48+72)\Omega^{2}}\\
&=
-\dfrac{184}{12}V
\end{align}
Then calculate the circuit and mesh currents
\begin{align}
i_{1}
&=\dfrac{\varepsilon_{1}+\bar{\de}\phi}{R_{1}}
=
\dfrac{12\cdot40V-184V}{6\Omega\cdot12}=\dfrac{296V}{72\Omega}
=\dfrac{148}{36}A
&&=\mu_1
\\
i_{2}
&=
\dfrac{\varepsilon_{2}+\bar{\de}\phi}{R_{2}}
=
\dfrac{12\cdot 12V-184V}{4\Omega\cdot12}=-\dfrac{40V}{48\Omega}=
-\dfrac{40V\cdot\frac{3}{4}}{48\Omega\cdot\frac{3}{4}}
=-\dfrac{30}{36}A
&&=-\mu_1+\mu_2
\\
i_{3}
&=
\dfrac{\varepsilon_{3}+\bar{\de}\phi}{R_{3}}
=\dfrac{12\cdot (-24V)-184V}{12\Omega\cdot12}=\dfrac{-472V}{144\Omega}
=-\dfrac{118}{36}A
&&=-\mu_2
\end{align}
Finally, check the KCL
\begin{align}
i_{1}+i_{2}+i_{3}=\dfrac{148-30-118}{36}A=0
\end{align}
Computer simulation for this particular circuit can easily be arranged.

\section*{Acknowledgements}

The research was in part supported by the Estonian Research Council, Grant ETF-9038. Authors are grateful to participants of MOD II (Moduli, Operads, Dynamics II, Tallinn, 03-06 June 2014) and, in particular, to A. Siqveland and A. A. Voronov for the thorough discussions that helped to improve the quality of the manuscript.

%\par%\medskip
\noindent
Tallinn University of Technology, Estonia

\end{document}